\newcommand{\fmdc}{feasible and more desirable choice}
\newcommand{\PX}{\ensuremath{\mathcal{P}^*(X)}}
\newcommand{\Dom}{\ensuremath{\mathcal{D}\textit{om}}}
\newcommand{\Cc}{\ensuremath{\mathcal{C}}}
\newcommand{\A}{\ensuremath{\mathcal{A}}}
\renewcommand{\S}{\ensuremath{\mathcal{S}}}
\newcommand{\feas}[1]{"-->"_{#1}}
\newcommand{\pref}[1]{".>"_{#1}}
\newcommand{\CoM}[1]{\xymatrix @C 15pt{\ar@{->}[r]&}_{#1}}
\newcommand{\node}[1]{[name=#1]{\ovalbox{#1}}}
\newcommand{\redfeas}[2]{\ncline[arrows=->,linewidth=.08,linecolor=red,linestyle=dashed]{#1}{#2}}
\newcommand{\bluefeas}[2]{\ncline[arrows=->,linewidth=.03,linecolor=blue,linestyle=dashed]{#1}{#2}}
\newcommand{\bluearcfeas}[3]{\ncarc[arrows=->,linewidth=.03,linecolor=blue,linestyle=dashed,arcangle=#3]{#1}{#2}}
\newcommand{\reddes}[2]{\ncline[arrows=->,linewidth=.08,linecolor=red,linestyle=dotted]{#1}{#2}}
\newcommand{\bluedes}[2]{\ncline[arrows=->,linewidth=.03,linecolor=blue,linestyle=dotted]{#1}{#2}}
\newcommand{\bluearcdes}[3]{\ncarc[arrows=->,linewidth=.03,linecolor=blue,linestyle=dotted,arcangle=#3]{#1}{#2}}
\newcommand{\redarcdes}[3]{\ncarc[arrows=->,linewidth=.08,linecolor=red,linestyle=dotted,arcangle=#3]{#1}{#2}}
\definecolor{vertfonce}{rgb}{0,.5,0}
\definecolor{darkbrown}{cmyk}{.3,.75,.75,.15}
\newcommand{\verte}[1]{{\color{green}#1}}
\definecolor{vertfonce}{rgb}{0,.5,0}
\title{Feasibility/Desirability Games\\
for Normal Form Games, Choice Models and Evolutionary Games}
\author{Pierre Lescanne\thanks{This research has been supported by R\'{e}gion \emph{Ile de France}.}}
\institute{Universit\'e de Lyon, ENS de Lyon, CNRS (LIP), \\
46 all\'ee d'Italie, 69364 Lyon, France}
\begin{document}

\maketitle{}

\pagestyle{plain}

\begin{abstract}
  An abstraction of normal form games is proposed, called \emph{Feasibility}/\emph{Desirability} \emph{Games} (or FD Games in short). FD Games can be seen from three points of view: as a new
  presentation of games in which Nash equilibria can be found, as choice models in microeconomics or as a model of evolution in games.
\end{abstract}

\section{Introduction}

\emph{Feasibility/Desirability Games} (FD games in short) were designed by \citet{Roux2006} as a fruitful abstraction of
\emph{normal form games}. This abstraction goes beyond the model of matrices, used for normal forms, toward this of
directed graphs, which fits better to the true nature of the problem and yields an existence theorem for a natural
abstraction of Nash equilibrium.

On another hand, behavioral economics is based on \emph{choice correspondences} which are functions owned by decision
makers and aimed to describe human behavior.  In~\cite{rubinstein98:_model_bound_ration}, Rubinstein describes the
decision process as follows:
\begin{it}
  \begin{quotation}
    \noindent an agent ... has to chose an alternative after a process of deliberation in which he answers three questions:
    \begin{itemize}
    \item ``What is feasible'?''
    \item ``What is desirable?''
    \item ``What is the best alternative according to the notion of desirability, given the feasibility constraints?''
    \end{itemize}
  \end{quotation}
\end{it}

In this article, we show that if one considers such a game as a unique decision maker and if the relation of the game
called \emph{\fmdc{}} is acyclic (i.e., with no path from a node to itself), then the function that yields the abstract
Nash equilibria is a choice correspondence.   We show how an actual decision maker can be made, namely it can be
implemented as a game and the choices he makes are abstract Nash equilibria.  Hence we say that a decision maker can be
constituted of many agents, which is not a surprise.  Think of a decision taken by a directorate.  The decision is a the
rest of a compromise, i.e., an equilibrium.

\citet{SalantRubi08} give a sample of choice functions, based on what the authors call a \emph{frame}.  We show that FD
games are frames, in other words, a game with agents is a good tool to simulate a decision maker.  This is also related
to what \citet{bernheim07:_towar_choic_theret_found_for} call \emph{choice with ancillary functions} and
\cite{manzini07:_sequen_ration_choic}, \emph{Rational Shortlist Method}.  FD games have also interesting connections
with evolutionary games.  Let us mention that \emph{choice functions} have been studied by \citet{Nehring97}, but
despite it speaks about relations, this approach is not directly relevant with the one presented in this paper.  See
also \citet{alcantud107} for a somewhat more restricted extension of normal form games.

In this article, we present first in Section~\ref{sec:fd-games} the concept of \emph{FD~game} followed by this of
equilibrium: \emph{abstract Nash equilibrium} (Section~\ref{sec:ane}) and its extension called \emph{FD-equilibrium}
(Section~\ref{sec:FDeq}). We recall in Section~\ref{sec:choice} the notion of \emph{choice correspondence} and show its
connection with FD-games. Two kinds of choice correspondences are described in Section~\ref{sec:EqAN_as_ch}.
\emph{Evolutionary games} are described as FD-games in Section~\ref{sec:evo}.



\section{FD games}
\label{sec:fd-games}


FD games\footnote{Under the name of CP games.} were conceived as extensions of strategic games
(\cite{roux08:_conver_prefer_games}) which intend to algebraically formalize games using the minimal set of concepts
(Occam's Razor).  They actually implement the three items of the ``deliberation process'' as presented in
\cite{rubinstein98:_model_bound_ration,rubinstein06:microec}, namely \emph{feasibility}, \emph{desirability}, and
\emph{choice of the most desirable among the feasible alternatives}. Like in strategic games, there is a set $\A$ of
\emph{agents} and a set $\S$ of \emph{situations}\footnote{This corresponds to what is called a strategy profile in
  strategic games.  Sometimes a ``situation'' is called an ``outcome''.}.  Each agent moves from a situation to another;
but the move can occur only if it is possible for the agent, this is what
\cite{rubinstein98:_model_bound_ration,rubinstein06:microec} calls \emph{feasibility}. We write it~$\feas{a}$. To be
precise, $s \feas{a} s'$ means that the move from $s$ to $s'$ is feasible for agent $a$.  An agent can have a desire to
go to a situation rather than staying where he is, this is what is called \emph{desirability}
in~\cite{rubinstein98:_model_bound_ration,rubinstein06:microec}; it is written~$\pref{a}$.  $s\pref{a}s'$ means that
agent $a$ who is in situation $s$ desires to go in situation $s'$.  For instance, if the FD-game is a game in normal
form and $s$ and $s'$ are strategy profiles, strategy profile $s'$ has a better payoff for him than strategy profile
$s'$.  Hence, formally, an FD game is a \emph{4-uple} $`G = \langle \A, \S, (\feas{a})_{a\in \A}, (\pref{a})_{a\in
  \A}\rangle$.

\begin{example}
  Consider the game with two agents, thin blue and fat red and four situations $(H,H), (H,T), (T,T), (T,H)$.  Thin blue agent has feasibility
  \raisebox{2pt}{$\begin{psmatrix}[colsep=20pt][name=a]&[name=b]\ncline[arrows=->,linewidth=.03,linecolor=blue,linestyle=dashed]{a}{b}\end{psmatrix}$} and desirability
  \raisebox{2pt}{$\begin{psmatrix}[colsep=20pt][name=a]&[name=b]\ncline[arrows=->,linewidth=.03,linecolor=blue,linestyle=dotted]{a}{b}\end{psmatrix}$}, fat red agent has feasibility
  \raisebox{2pt}{$\begin{psmatrix}[colsep=25pt][name=a]&[name=b]\ncline[arrows=->,linewidth=.08,linecolor=red,linestyle=dashed]{a}{b}\end{psmatrix}$} and desirability
  \raisebox{2pt}{$\begin{psmatrix}[colsep=25pt][name=a]&[name=b]\ncline[arrows=->,linewidth=.08,linecolor=red,linestyle=dotted]{a}{b}\end{psmatrix}$},
\end{example}

\medskip

\hspace*{-20pt}
  \begin{psmatrix}[colsep=2cm,rowsep=2cm]
    & [name=HH]{H,H} & [name=HT]{H,T}\\
    & [name=TH]{T,H} & [name=TT]{T,T} 
    \psset{nodesep=5pt} %
    \ncline[arrows=<->,linewidth=.03,linecolor=blue,linestyle=dashed]{HH}{HT}
    \ncline[arrows=<->,linewidth=.03,linecolor=blue,linestyle=dashed]{TH}{TT} %
    \ncline[arrows=<->,linewidth=.08,linecolor=red,linestyle=dashed]{HH}{TH}
    \ncline[arrows=<->,linewidth=.08,linecolor=red,linestyle=dashed]{HT}{TT}
  \end{psmatrix}
  \qquad
  \begin{psmatrix}[colsep=2cm,rowsep=2cm]
    & [name=HH]{H,H} & [name=HT]{H,T}\\
    & [name=TH]{T,H} & [name=TT]{T,T} \psset{nodesep=5pt} %
    \ncline[arrows=->,linewidth=.03,linecolor=blue,linestyle=dotted]{HH}{HT} %
    \ncline[arrows=->,linewidth=.03,linecolor=blue,linestyle=dotted]{TT}{TH} %
    \ncline[arrows=->,linewidth=.08,linecolor=red,linestyle=dotted]{TH}{HH} %
    \ncline[arrows=->,linewidth=.08,linecolor=red,linestyle=dotted]{HT}{TT} %
    \ncarc[arrows=->,linewidth=.08,linecolor=red,linestyle=dotted,arcangle=30]{HT}{HH} %
    \ncarc[arrows=->,linewidth=.03,linecolor=blue,linestyle=dotted,arcangle=30]{HH}{TH} 
    \ncarc[arrows=->,linewidth=.08,linecolor=red,linestyle=dotted,arcangle=30]{TH}{TT}
    \ncarc[arrows=->,linewidth=.03,linecolor=blue,linestyle=dotted,arcangle=30]{TT}{HT}
  \end{psmatrix}

\medskip

When presented as a strategic game, this game is known as \emph{matching pennies}.  The coming example is a game which is not derived from a strategic game.

\begin{example}[A quest for the wonderland]\label{ex:quest} Two captains with their ships, which are not sister ships,
 and their crew look for nice islands to stay.  Due to the wind and the different performances of the ships, the eight islands are not equally accessible and
  some are accessible one way by a ship, but not the other way.  Similarly not all the islands are equally nice and the crew may prefer (desire) one over
  the other.  This is represented by the following diagrams.
\[
\begin{psmatrix}[colsep=25pt,rowsep=15pt] &&&\node{D} & \node{H}\\ &&\node{B} & \node{E}\\ &\node{A} && \node{F} \\ &&\node{C} \\ &&&\node{G} \redfeas{A}{B} %
\redfeas{B}{D} %
\redfeas{C}{B} \bluefeas{C}{F} \redfeas{F}{G} \bluefeas{A}{C} \bluefeas{G}{C} \redfeas{H}{E} \bluefeas{B}{E} \bluearcfeas{B}{C}{30} \bluearcfeas{E}{H}{30} \bluearcfeas{H}{E}{30}
\end{psmatrix} \qquad
\begin{psmatrix}[colsep=25pt,rowsep=15pt] &&&\node{D} & \node{H}\\ &&\node{B} & \node{E}\\ &\node{A} && \node{F} \\ &&\node{C} \\ &&&\node{G}
\end{psmatrix} %
\reddes{A}{B} %
\reddes{C}{B} %
\reddes{B}{D} %
\bluedes{A}{C}%
\redarcdes{B}{E}{-30} %
\bluearcdes{B}{E}{30} %
\bluearcdes{E}{H}{30} %
\redarcdes{H}{E}{30}
\redarcdes{C}{G}{30} %
\bluearcdes{G}{C}{30} \reddes{F}{G} \bluedes{C}{F}
\]
\begin{center}
  \textit{Feasibility of the journeys}\qquad \qquad \qquad \textit{Desirability among islands}
\end{center}
\end{example}

\section{Equilibria}
\label{sec:equilib}

Let us now be a little formal.

\subsection{Abstract Nash Equilibrium }
\label{sec:ane}

A Nash equilibrium is a situation in which no agent can move toward a situation he (she) desires.

\begin{definition}[Abstract Nash equilibrium] An \emph{abstract Nash equilibrium} is a situation $s$ such that: \( `A a`:\A, `A s'`:\S \ . \ s \feas{a} s' \wedge s \pref{a} s' \
\Longrightarrow\ s=s'.\) This is written $\textrm{Eq}^{aN}_{`G}(s)$ (aN stands for \emph{abstract Nash}).
\end{definition}

This suggests a relation $\rightarrow_{a}$ called \emph{\fmdc{} for $a$} defined as
\[\rightarrow_a \quad = \quad \feas{a} \cap \pref{a}\] i.e. $s"->"_a s'$ if $s\feas{a}s'$ and $s\pref{a}s'$.  From these relations, one defines a relation \emph{\fmdc{}} which sums
up the \emph{\fmdc} of all the agents).
\[\rightarrow \quad = \quad \bigcup_{a\in\A} "->"_a.\]
\begin{definition}[Abstract Nash equilibrium, as sink] A situation $s$ is an \emph{abstract Nash equilibria} if $s$ is a sink\footnote{Given a relation $R$ (or a graph) a
    \emph{sink} is a node $s$ such that $`A s', s"->"s' "=>" s=s'$.} for the \fmdc{}, i.e., if \(`A s' \in \S, s "->" s' \ "=>" s = s'\).
\end{definition}
\emph{Abstract Nash equilibria} can be seen as one of  the \emph{feasible most desirable choices}. Actually there is no other situation which is feasible and more desirable.

This is illustrated by the \emph{prisoner's dilemma}.

\begin{example}[Prisoner's dilemma]
The situations are \emph{Quits} ($Q$) and \emph{Finks} ($F$).

\vspace*{20pt}

  \hspace*{-50pt} \(
  \begin{psmatrix}[colsep=2cm,rowsep=2cm] &[name=QQ]{Q,Q} & [name=QF]{Q,F}\\ &[name=FQ]{F,Q} & [name=FF]{F,F} \psset{nodesep=5pt}
\ncline[arrows=<->,linewidth=.03,linecolor=blue,linestyle=dashed]{QQ}{QF} \ncline[arrows=<->,linewidth=.03,linecolor=blue,linestyle=dashed]{FQ}{FF}
\ncline[arrows=<->,linewidth=.08,linecolor=red,linestyle=dashed]{QQ}{FQ} \ncline[arrows=<->,linewidth=.08,linecolor=red,linestyle=dashed]{QF}{FF}
  \end{psmatrix} \qquad
  \begin{psmatrix}[colsep=2cm,rowsep=2cm] &[name=QQ]{Q,Q} & [name=QF]{Q,F}\\ &[name=FQ]{F,Q} & [name=FF]{F,F} \psset{nodesep=3.5pt}
\ncline[arrows=->,linewidth=.03,linecolor=blue,linestyle=dotted]{QQ}{QF} \ncarc[arrows=->,linewidth=.08,linecolor=red,linestyle=dotted, arcangle=-30]{QF}{QQ}
\ncline[arrows=->,linewidth=.03,linecolor=blue,linestyle=dotted]{FQ}{FF} \ncarc[arrows=->,linewidth=.08,linecolor=red,linestyle=dotted, arcangle=30]{FF}{FQ}
\ncline[arrows=->,linewidth=.08,linecolor=red,linestyle=dotted]{QQ}{FQ} \ncarc[arrows=->,linewidth=.03,linecolor=blue,linestyle=dotted, arcangle=30]{FQ}{QQ}
\ncline[arrows=->,linewidth=.08,linecolor=red,linestyle=dotted]{QF}{FF} \ncarc[arrows=->,linewidth=.03,linecolor=blue,linestyle=dotted, arcangle=-30]{FF}{QF}
\ncarc[arrows=->,linewidth=.08,linecolor=red,linestyle=dotted, arcangle=-25]{FF}{QQ} \ncarc[arrows=->,linewidth=.03,linecolor=blue,linestyle=dotted, arcangle=25]{FF}{QQ}
\ncarc[arrows=->,linewidth=.08,linecolor=red,linestyle=dotted, arcangle=30]{QF}{FQ} \ncarc[arrows=->,linewidth=.03,linecolor=blue,linestyle=dotted, arcangle=30]{FQ}{QF}
  \end{psmatrix} \qquad \)

  \bigskip\medskip

  \noindent The above graphs represent the \emph{feasibility} (on the left) and the \emph{desirability} (on the right).  Vertical arrows for feasibility represent possibilities for
  the first agent to change his (her) attitude and horizontal arrows are for the second agent.  The right digram represents abstractly (i.e., without referring to payoffs that are
  meaningless in that case) the desires (preferences) of the agents.  Both agents prefer $(Q,Q)$ to $(F,F)$ (desire $(Q,Q)$ more than $(F,F)$), but first agent prefers $(F,Q)$ to
  $(F,F)$, $(F,Q)$ to $(Q,F)$, etc.  The graphs below represent the \emph{\fmdc} for both prisoners and the \emph{(general) \fmdc}:

  \bigskip

  \hspace*{-50pt} \(
  \begin{psmatrix}[colsep=2cm,rowsep=2cm] &[name=QQ]{Q,Q} & [name=QF]{Q,F}\\ &[name=FQ]{F,Q} & [name=FF]{F,F} \psset{nodesep=5pt}
\ncline[arrows=->,linewidth=.03,linecolor=blue]{QQ}{QF} \ncline[arrows=->,linewidth=.03,linecolor=blue]{FQ}{FF} \ncline[arrows=->,linewidth=.08,linecolor=red]{QQ}{FQ}
\ncline[arrows=->,linewidth=.08,linecolor=red]{QF}{FF}
  \end{psmatrix} \quad
  \begin{psmatrix}[colsep=2cm,rowsep=2cm] &[name=QQ]{Q,Q} & [name=QF]{Q,F}\\ &[name=FQ]{F,Q} & [name=FF]{\mathbf{\mathit{\color{green}F,F}}} \psset{nodesep=5pt}
\ncline[arrows=->,linewidth=.03]{QQ}{QF} \ncline[arrows=->,linewidth=.03]{FQ}{FF} \ncline[arrows=->,linewidth=.03]{QQ}{FQ} \ncline[arrows=->,linewidth=.03]{QF}{FF}
  \end{psmatrix} \)

  \bigskip \newcommand{\flch}{
    \begin{psmatrix}[colsep=.4cm,rowsep=0cm] [name=a] & [name=b] \psset{nodesep=.1pt} \ncline[arrows=->,linewidth=.02]{a}{b}
    \end{psmatrix} }

  \noindent Clearly $F,F$ is a sink for \raisebox{3pt}{$\flch$} and this is an abstract Nash equilibrium; this means that there is no more desirable choice for the prisoners in
  this situation.  One notices that the change from $(F,F)$ to $(Q,Q)$ which corresponds to a more desirable choice for both agents is not feasible due to constraints which forces
  feasibility to be on horizontal or vertical lines only.
\end{example} 
The \emph{Battle of the Sexes}, also called \emph{Bach or Stravinsky} or \emph{BoS} (\cite{osborne94:_cours_game_theory,osborne04a}) is an example with two abstract
Nash equilibria.

\begin{example}[Battles of the Sexes] It is given by the two diagrams which describe \emph{feasibility} and \emph{desirability}:

  \bigskip

  \hspace*{-50pt} \(
  \begin{psmatrix}[colsep=2cm,rowsep=2cm] &[name=BB]{B,B} & [name=BS]{B,S}\\ &[name=SB]{S,B} & [name=SS]{S,S} \psset{nodesep=5pt}
    \ncline[arrows=<->,linewidth=.03,linecolor=blue,linestyle=dashed]{BB}{BS} \ncline[arrows=<->,linewidth=.03,linecolor=blue,linestyle=dashed]{SB}{SS}
    \ncline[arrows=<->,linewidth=.08,linecolor=red,linestyle=dashed]{BB}{SB} \ncline[arrows=<->,linewidth=.08,linecolor=red,linestyle=dashed]{BS}{SS}
  \end{psmatrix} \qquad
  \begin{psmatrix}[colsep=2cm,rowsep=2cm] &[name=BB]{B,B} & [name=BS]{B,S}\\ &[name=SB]{S,B} & [name=SS]{S,S} \psset{nodesep=3.5pt}
    \ncline[arrows=<-,linewidth=.03,linecolor=blue,linestyle=dotted]{BB}{BS} \ncarc[arrows=->,linewidth=.08,linecolor=red,linestyle=dotted, arcangle=-30]{BS}{BB}
    \ncline[arrows=->,linewidth=.03,linecolor=blue,linestyle=dotted]{SB}{SS} %
    \ncarc[arrows=<-,linewidth=.08,linecolor=red,linestyle=dotted, arcangle=30]{SS}{SB} %
    \ncline[arrows=<-,linewidth=.08,linecolor=red,linestyle=dotted]{BB}{SB} \ncarc[arrows=->,linewidth=.03,linecolor=blue,linestyle=dotted, arcangle=30]{SB}{BB} %
    \ncline[arrows=->,linewidth=.08,linecolor=red,linestyle=dotted]{BS}{SS} \ncarc[arrows=->,linewidth=.03,linecolor=blue,linestyle=dotted, arcangle=30]{BS}{SS}
    \ncarc[arrows=->,linewidth=.03,linecolor=blue,linestyle=dotted, arcangle=-30]{SS}{BB} \ncarc[arrows=->,linewidth=.08,linecolor=red,linestyle=dotted, arcangle=-30]{BB}{SS}
  \end{psmatrix} \qquad \)

  \bigskip\medskip

  \noindent with the \emph{\fmdc} for both agents and the \emph{(general) \fmdc{}}:

  \bigskip

  \hspace*{-50pt} \(
  \begin{psmatrix}[colsep=2cm,rowsep=2cm] &[name=BB]{B,B} & [name=BS]{B,S}\\ &[name=SB]{S,B} & [name=SS]{S,S} \psset{nodesep=5pt}
    \ncline[arrows=<-,linewidth=.03,linecolor=blue]{BB}{BS} %
    \ncline[arrows=->,linewidth=.03,linecolor=blue]{SB}{SS} %
    \ncline[arrows=<-,linewidth=.08,linecolor=red]{BB}{SB}
    \ncline[arrows=->,linewidth=.08,linecolor=red]{BS}{SS}
  \end{psmatrix}
  \quad
  \begin{psmatrix}[colsep=2cm,rowsep=2cm] &[name=BB]{\color{green}B,B} & [name=BS]{B,S}\\ &[name=SB]{S,B} & [name=SS]{\mathbf{\mathit{\color{green}S,S}}}
    \psset{nodesep=5pt} \ncline[arrows=<-,linewidth=.03]{BB}{BS} %
    \ncline[arrows=->,linewidth=.03]{SB}{SS} %
    \ncline[arrows=<-,linewidth=.03]{BB}{SB} %
    \ncline[arrows=->,linewidth=.03]{BS}{SS}
  \end{psmatrix} \)

  \bigskip
There are two Nash equilibria ${\color{green}B,B}$ and ${\color{green}S,S}$.
\end{example}

\begin{example}[A quest for the wonderland, the choice] In Example~\ref{ex:quest} the \emph{\fmdc} is given by the following diagram:
  \[
\begin{psmatrix}[colsep=25pt,rowsep=15pt] 
&&&\node{D} & \node{H}\\ 
&&\node{B} & \node{E}\\ 
&\node{A} && \node{F} \\ 
&&\node{C} \\ 
&&&\node{G} %
\ncline[arrows=->,linewidth=.03]{A}{B}
\ncline[arrows=->,linewidth=.03]{A}{C} %
\ncline[arrows=->,linewidth=.03]{B}{C} %
\ncline[arrows=->,linewidth=.03]{C}{F} %
\ncline[arrows=->,linewidth=.03]{F}{G} %
\ncline[arrows=->,linewidth=.03]{G}{C} %
\ncline[arrows=->,linewidth=.03]{B}{D} %
\ncline[arrows=->,linewidth=.03]{B}{E} %
\ncarc[arrows=->,linewidth=.03,arcangle=30]{E}{H} %
\ncarc[arrows=->,linewidth=.03,arcangle=30]{H}{E} %
\end{psmatrix}
\]
It has an abstract Nash equilibrium, namely $D$, i.e., one the feasible most desirable choice.
\end{example} 

Actually the relation \emph{\fmdc} is the relation of interest, but the two relations \emph{feasibility} and \emph{desirability} are essential for two reasons. First they make the
connection with strategic games (games in normal forms), in particular \emph{desirability} is what is called \emph{preference} in strategic
games (\cite{osborne94:_cours_game_theory}) whereas \emph{feasibility} is an abstraction of the relation that allows moves only along rows, columns, heights, etc. in strategic games.
Second they are methodologically important, as they allow thinking models in terms of two relations answering two specific questions: what is feasible? what is desirable?

In algorithmic game theory (\cite{daskalakis09:_compl_of_comput_nash_equil,nisan07:_algor_game_theor,johnson07:_np_compl_colum}), Nash equilibria are also sinks for a relation, but
the definition of a Nash equilibrium is not given as being a sink.  A mixed strategy Nash equilibrium is actually a sink for a completely different relation which is used to find
it and which has no connection with the \emph{\fmdc}.

\subsection{FD equilibria}
\label{sec:FDeq}

Let us recall few concepts on directed graphs and strongly connected components.

\subsubsection{Strongly connected components.}
\label{sec:scc}

Given a directed graph, a strongly connected component (a SCC in short) is a maximal set of nodes connected in both direction by paths\footnote{A \emph{path} is a sequence of
  arcs.}. In other words a SCC is a set of nodes of the directed graph, which are connected by paths (sequences of arcs).  Moreover this set is maximal, which means that if one
adds a node, then there is a node in the SCC which is not connected to it or the other way around.  Fig.~\ref{fig:graph_scc} shows a graph, its decomposition into strongly
connected components and its reduced graph. Actually it has four strongly connected components associated with ${\color{green} \bullet}$, ${\color{brown} \blacksquare}$,
${\color{red} \bigstar}$ and ${\color{blue} \blacktriangle}$.  The SCC's form a graph called the \emph{reduced graph}, the nodes of which are the SCC's and the arc of which are as
follows: there is an arc from a SCC $N$ to a SCC $N'$ if there are a node $n$ in $N$, a node $n'$ in $N'$ and an arc $n"->"n'$.

\begin{figure}[ht]
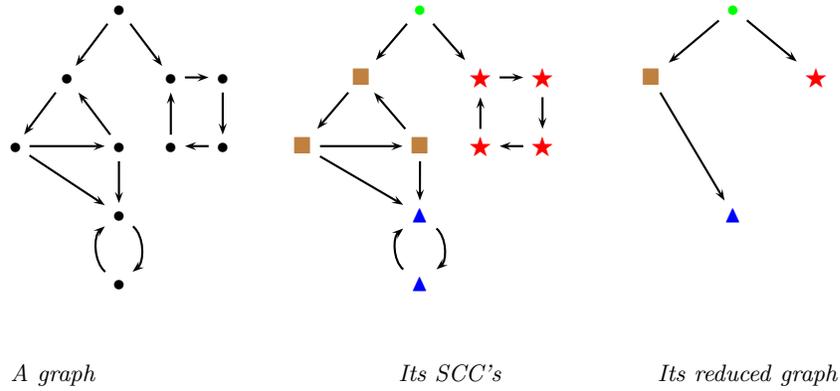
 \centering \(
  \begin{psmatrix}
    [nodesep=3pt,colsep=15pt,rowsep=15pt] &&& [name=a]{\bullet} \\ %
    && [name=c]{\bullet} && [name=d]{\bullet}& [name=j]{\bullet}\\ %
    &[name=e]{\bullet} && [name=f]{\bullet} & [name=k]{\bullet} & [name=l]{\bullet}\\ %
    &&& [name=h]{\bullet}\\ &&& [name=i]{\bullet} \\ %
    \ncline[arrows=->]{a}{c} \ncline[arrows=->]{a}{d} %
    \ncline[arrows=->]{c}{e} \ncline[arrows=->]{e}{f} %
    \ncline[arrows=->]{f}{c} \ncline[arrows=->]{e}{g} %
    \ncline[arrows=->]{f}{g} \ncline[arrows=->]{e}{h} %
    \ncline[arrows=->]{f}{h}%
    \ncarc[arrows=->,arcangle=50]{i}{h} %
    \ncarc[arrows=->,arcangle=50]{h}{i} %
    \ncline[arrows=->]{d}{j} \ncline[arrows=->]{j}{l} %
    \ncline[arrows=->]{l}{k} \ncline[arrows=->]{k}{d}
  \end{psmatrix} 
  \quad
  \begin{psmatrix}[nodesep=3pt,colsep=15pt,rowsep=15pt] &&& [name=a]{\color{green} \bullet} \\ && [name=c]{\color{brown} \blacksquare} && [name=d]{\color{red} \bigstar}&
[name=j]{\color{red} \bigstar}\\ &[name=e]{\color{brown} \blacksquare} && [name=f]{\color{brown} \blacksquare} & [name=k]{\color{red} \bigstar} & [name=l]{\color{red} \bigstar}\\
&&& [name=h]{\color{blue} \blacktriangle}\\ &&& [name=i]{\color{blue} \blacktriangle} \\ \ncline[arrows=->]{a}{c} \ncline[arrows=->]{a}{d} \ncline[arrows=->]{c}{e}
\ncline[arrows=->]{e}{f} \ncline[arrows=->]{f}{c} \ncline[arrows=->]{e}{g} \ncline[arrows=->]{f}{g} \ncline[arrows=->]{e}{h} \ncline[arrows=->]{f}{h}
\ncarc[arrows=->,arcangle=50]{i}{h} \ncarc[arrows=->,arcangle=50]{h}{i} \ncline[arrows=->]{d}{j} \ncline[arrows=->]{j}{l} \ncline[arrows=->]{l}{k} \ncline[arrows=->]{k}{d}
  \end{psmatrix} \quad
  \begin{psmatrix}[nodesep=3pt,colsep=24pt,rowsep=15pt] && [name=a]{\color{green} \bullet} \\ & [name=b]{\color{brown} \blacksquare} && [name=c]{\color{red} \bigstar} \\\\ &&
[name=d]{\color{blue} \blacktriangle} \\\\ \ncline[arrows=->]{a}{b} \ncline[arrows=->]{a}{c} \ncline[arrows=->]{b}{d}
  \end{psmatrix} \) \hspace*{20pt}\emph{A graph} \qquad \qquad \qquad\qquad \qquad\qquad \emph{Its SCC's} \qquad \qquad \qquad \emph{Its reduced graph}
  \caption{A graph, its SCC's and, its reduced graph}
  \label{fig:graph_scc}
\end{figure}

\subsubsection{FD equilibria as SCC's.}
\label{sec:FDe_as_SCC}

An abstract Nash equilibrium is a sink in the graph of the \emph{\fmdc}.  We know that abstract  Nash equilibria do not exist always, but if we generalize the concept, there exists always
an equilibrium.  To generalize abstract Nash equilibria, we consider the case where the agents have reached, by the \emph{\fmdc}, some kind of end point (end ``cluster''), actually an SCC, and are
unable to leave it.  This corresponds to what people call a \emph{dynamic equilibrium}.  Hence a natural extension is to say that a \emph{FD~equilibrium} is a sink in the reduced
graph, i.e. a SCC with no out arc. The two SCC's associated with ${\color{blue} \blacktriangle}$ and ${\color{red} \bigstar}$ in Fig.~\ref{fig:graph_scc} are such sinks in the
reduced graph.

\subsubsection{Examples of FD equilibria}
\label{sec:ex}

\begin{example}[Matching pennies]
  The \emph{\fmdc{}} for the matching pennies is given by the following diagrams.

\medskip

\hspace*{-20pt}
  \begin{psmatrix}[colsep=2cm,rowsep=2cm]
    & [name=HH]{H,H} & [name=HT]{H,T}\\
    & [name=TH]{T,H} & [name=TT]{T,T} 
    \psset{nodesep=5pt} %
    \ncline[arrows=->,linewidth=.03,linecolor=blue]{HH}{HT}
    \ncline[arrows=<-,linewidth=.03,linecolor=blue]{TH}{TT} %
    \ncline[arrows=<-,linewidth=.08,linecolor=red]{HH}{TH}
    \ncline[arrows=->,linewidth=.08,linecolor=red]{HT}{TT}
  \end{psmatrix}
  \qquad
  \begin{psmatrix}[colsep=2cm,rowsep=2cm]
    & [name=HH]{H,H} & [name=HT]{H,T}\\
    & [name=TH]{T,H} & [name=TT]{T,T} 
    \psset{nodesep=5pt} %
    \ncline[arrows=->,linewidth=.03]{HH}{HT}
    \ncline[arrows=<-,linewidth=.03]{TH}{TT} %
    \ncline[arrows=<-,linewidth=.03]{HH}{TH}
    \ncline[arrows=->,linewidth=.03]{HT}{TT}
  \end{psmatrix}\end{example}
\begin{example}[Matching pennies with hidden coins] 
  We imagine a new version of \emph{matching pennies} where players have three actions \emph{Presenting the Coin Head up} ($H$), \emph{Presenting the Coin Tail up} ($T$) or
  \emph{Not Presenting the Coin} ($N$). When both present the same side, fat red wins and thin blue loses and when they both present different sides, thin blue wins and fat red
  loses. In addition, we assume that when both present nothing they both win and when only one presents nothing they both lose.  The game is presented in
  Fig.~\ref{fig:match_pennies_2}.  The first diagram is \emph{feasibility}, the second diagram is \emph{desirability}\footnote{There are two options.  You can consider that the
    desirability is not transitive or you can assume that the arrows that can be deduced by transitivity have been dropped. But this difference is unimportant in what follows.},
  the third diagram is \emph{general \fmdc}, and the fourth diagram is the reduced graph of \emph{general \fmdc}.  One notices two FD equilibria. First, $(N,N)$ which is also an
  abstract Nash equilibrium and $\{(H,H), (H,T), (T,H), (T,T)\}$ which corresponds to a SCC made of four situations.
\end{example}
\begin{figure}[ht]
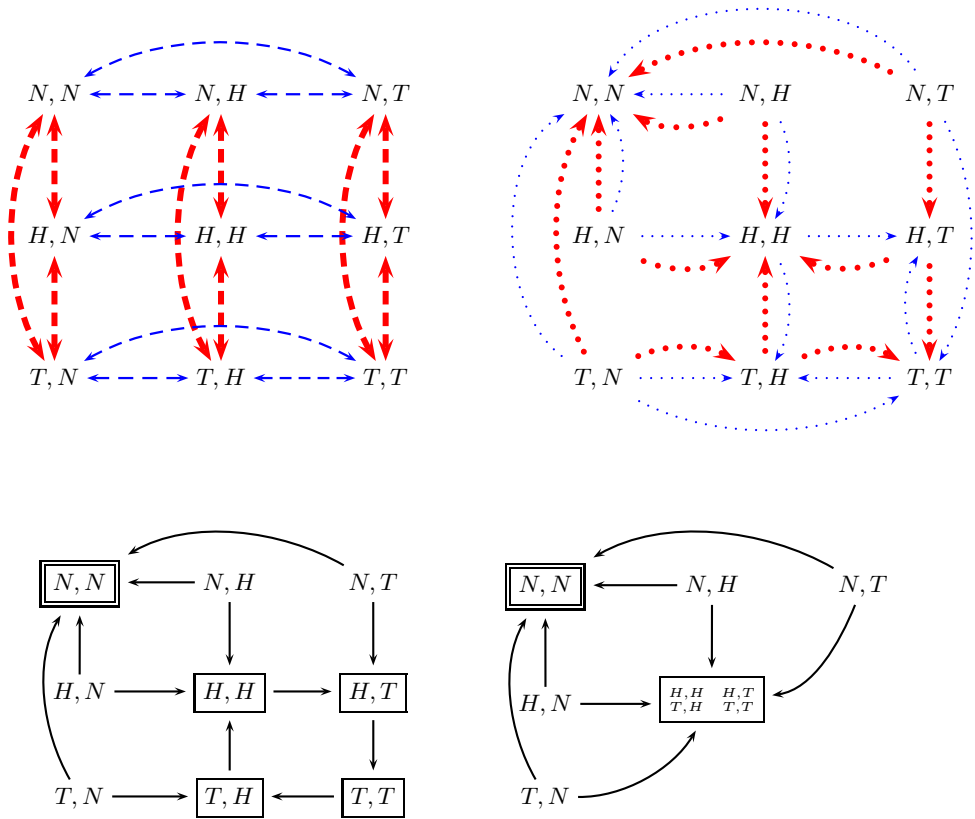
 \centering
\[ \hspace*{-40pt}
\begin{psmatrix}[nodesep=3pt] & [name=NN]{N,N} & [name=NH]{N,H} & [name=NT]{N,T}\\ & [name=HN]{H,N} & [name=HH]{H,H} & [name=HT]{H,T}\\ & [name=TN]{T,N} & [name=TH]{T,H} &
[name=TT]{T,T} \ncline[arrows=<->,linewidth=.08,linecolor=red,linestyle=dashed]{NN}{HN} \ncline[arrows=<->,linewidth=.08,linecolor=red,linestyle=dashed]{HN}{TN}
\ncarc[arrows=<->,linewidth=.08,linecolor=red,linestyle=dashed,arcangle=-30]{NN}{TN} \ncline[arrows=<->,linewidth=.08,linecolor=red,linestyle=dashed]{NH}{HH}
\ncline[arrows=<->,linewidth=.08,linecolor=red,linestyle=dashed]{HH}{TH} \ncarc[arrows=<->,linewidth=.08,linecolor=red,linestyle=dashed,arcangle=-30]{NH}{TH}
\ncline[arrows=<->,linewidth=.08,linecolor=red,linestyle=dashed]{NT}{HT} \ncline[arrows=<->,linewidth=.08,linecolor=red,linestyle=dashed]{HT}{TT}
\ncarc[arrows=<->,linewidth=.08,linecolor=red,linestyle=dashed,arcangle=-30]{NT}{TT} 
\ncline[arrows=<->,linewidth=.03,linecolor=blue,linestyle=dashed]{NN}{NH}
\ncline[arrows=<->,linewidth=.03,linecolor=blue,linestyle=dashed]{NH}{NT} \ncarc[arrows=<->,linewidth=.03,linecolor=blue,linestyle=dashed,arcangle=30]{NN}{NT}
\ncline[arrows=<->,linewidth=.03,linecolor=blue,linestyle=dashed]{HN}{HH} \ncline[arrows=<->,linewidth=.03,linecolor=blue,linestyle=dashed]{HH}{HT}
\ncarc[arrows=<->,linewidth=.03,linecolor=blue,linestyle=dashed,arcangle=30]{HN}{HT} \ncline[arrows=<->,linewidth=.03,linecolor=blue,linestyle=dashed]{TN}{TH} %
\ncline[arrows=<->,linewidth=.03,linecolor=blue,linestyle=dashed]{TH}{TT} \ncarc[arrows=<->,linewidth=.03,linecolor=blue,linestyle=dashed,arcangle=30]{TN}{TT}
   \end{psmatrix} 
\qquad
\begin{psmatrix}[nodesep=3pt] 
& [name=NN]{N,N} & [name=NH]{N,H} & [name=NT]{N,T}\\ 
& [name=HN]{H,N} & [name=HH]{H,H} & [name=HT]{H,T}\\ 
& [name=TN]{T,N} & [name=TH]{T,H} & [name=TT]{T,T} 
\ncline[arrows=->,linewidth=.08,linecolor=red,linestyle=dotted]{HN}{NN} \ncline[arrows=->,linewidth=.08,linecolor=red,linestyle=dotted]{NH}{HH}
\ncline[arrows=->,linewidth=.08,linecolor=red,linestyle=dotted]{TH}{HH} \ncline[arrows=->,linewidth=.08,linecolor=red,linestyle=dotted]{NT}{HT}
\ncline[arrows=->,linewidth=.08,linecolor=red,linestyle=dotted]{HT}{TT} \ncarc[arrows=->,linewidth=.08,linecolor=red,linestyle=dotted,arcangle=30]{TN}{NN}
\ncarc[arrows=->,linewidth=.08,linecolor=red,linestyle=dotted,arcangle=-30]{NT}{NN} \ncarc[arrows=->,linewidth=.08,linecolor=red,linestyle=dotted,arcangle=30]{HT}{HH}
\ncarc[arrows=->,linewidth=.08,linecolor=red,linestyle=dotted,arcangle=30]{TH}{TT} \ncarc[arrows=->,linewidth=.08,linecolor=red,linestyle=dotted,arcangle=30]{NH}{NN}
\ncarc[arrows=->,linewidth=.08,linecolor=red,linestyle=dotted,arcangle=-30]{HN}{HH} \ncarc[arrows=->,linewidth=.08,linecolor=red,linestyle=dotted,arcangle=30]{TN}{TH} 
\ncline[arrows=->,linewidth=.03,linecolor=blue,linestyle=dotted]{NH}{NN} \ncline[arrows=->,linewidth=.03,linecolor=blue,linestyle=dotted]{HN}{HH}
\ncline[arrows=->,linewidth=.03,linecolor=blue,linestyle=dotted]{HH}{HT} \ncarc[arrows=->,linewidth=.03,linecolor=blue,linestyle=dotted,arcangle=30]{HH}{TH}
\ncline[arrows=->,linewidth=.03,linecolor=blue,linestyle=dotted]{TT}{TH} \ncarc[arrows=->,linewidth=.03,linecolor=blue,linestyle=dotted,arcangle=30]{TT}{HT}
\ncarc[arrows=->,linewidth=.03,linecolor=blue,linestyle=dotted,arcangle=60]{TN}{NN} \ncarc[arrows=->,linewidth=.03,linecolor=blue,linestyle=dotted,arcangle=-60]{NT}{NN}
\ncarc[arrows=->,linewidth=.03,linecolor=blue,linestyle=dotted,arcangle=-30]{HN}{NN} \ncarc[arrows=->,linewidth=.03,linecolor=blue,linestyle=dotted,arcangle=30]{NH}{HH}
\ncarc[arrows=->,linewidth=.03,linecolor=blue,linestyle=dotted,arcangle=-30]{TN}{TT} \ncarc[arrows=->,linewidth=.03,linecolor=blue,linestyle=dotted,arcangle=30]{NT}{TT}
\ncline[arrows=->,linewidth=.03,linecolor=blue,linestyle=dotted]{TN}{TH}
 \end{psmatrix}
\]

\vspace*{40pt}

\[ \hspace*{-40pt}
\begin{psmatrix}[nodesep=3pt,colsep=28pt,rowsep=24pt] & [name=NN]{\doublebox{$N,N$}} & [name=NH]{N,H} & [name=NT]{N,T}\\ & [name=HN]{H,N} & [name=HH]{\fbox{$H,H$}} &
[name=HT]{\fbox{$H,T$}}\\ & [name=TN]{T,N} & [name=TH]{\fbox{$T,H$}} & [name=TT]{\fbox{$T,T$}} %
\ncline[arrows=->]{NH}{NN} %
\ncline[arrows=->]{HN}{NN} %
\ncline[arrows=->]{NH}{HH} %
\ncline[arrows=->]{HN}{HH} %
\ncline[arrows=->]{NT}{HT} %
\ncline[arrows=->]{TN}{TH} %
\ncline[arrows=->]{TT}{TH} %
\ncline[arrows=->]{TH}{HH} %
\ncline[arrows=->]{HH}{HT} %
\ncline[arrows=->]{HT}{TT} %
\ncarc[arrows=->,arcangle=-30]{NT}{NN} %
\ncarc[arrows=->,arcangle=30]{TN}{NN}
  \end{psmatrix}\quad
  \begin{psmatrix}[nodesep=3pt,colsep=28pt,rowsep=24pt] 
& [name=NN]{\doublebox{$N,N$}} & [name=NH]{N,H} & [name=NT]{N,T}\\ 
& [name=HN]{H,N} & [name=HH]{\fbox{$\stackrel{\mbox{$\scriptscriptstyle H,H$}}{\mbox{$\scriptscriptstyle T,H$}}\ \ \stackrel{\mbox{$\scriptscriptstyle H,T$}}{\mbox{$\scriptscriptstyle T,T$}}$}} \\ %
& [name=TN]{T,N} & \ncline[arrows=->]{NH}{NN} %
\ncline[arrows=->]{HN}{NN} %
\ncline[arrows=->]{NH}{HH} %
\ncline[arrows=->]{HN}{HH} \ncarc[arrows=->,arcangle=30]{NT}{HH} %
\ncarc[arrows=->,arcangle=-30]{TN}{HH} \ncarc[arrows=->,arcangle=-30]{NT}{NN} %
\ncarc[arrows=->,arcangle=30]{TN}{NN}
  \end{psmatrix}
\]
  \caption{Matching pennies with hidden coins}
  \label{fig:match_pennies_2}
\end{figure}
\begin{example}[Prisoner's dilemma with communications]
  Adding communication to the prisoner's dilemma is easy.  Suppose there is a common knowledge among the prisoners that if they decide to fink or cooperate they will be put in the
  same cell and they will be able to communicate.  In the feasibility graph this means that one adds two arcs from $F,F$ to $Q,Q$, one for each prisoner.

\bigskip

  \hspace*{50pt}
  \begin{psmatrix}[colsep=2cm,rowsep=2cm] &[name=QQ]{Q,Q} & [name=QF]{Q,F}\\ &[name=FQ]{F,Q} & [name=FF]{F,F} \psset{nodesep=5pt}
    \ncline[arrows=<->,linewidth=.03,linecolor=blue,linestyle=dashed]{QQ}{QF} %
    \ncline[arrows=<->,linewidth=.03,linecolor=blue,linestyle=dashed]{FQ}{FF}
    \ncline[arrows=<->,linewidth=.08,linecolor=red,linestyle=dashed]{QQ}{FQ} %
    \ncline[arrows=<->,linewidth=.08,linecolor=red,linestyle=dashed]{QF}{FF}%
    \ncarc[arrows=<->,linewidth=.08,linecolor=red,linestyle=dashed, arcangle=-25]{FF}{QQ} %
    \ncarc[arrows=<->,linewidth=.03,linecolor=blue,linestyle=dashed, arcangle=25]{FF}{QQ}
   \end{psmatrix} 

\bigskip

\noindent and one gets the \emph{\fmdc} for both prisoners and the \emph{general \fmdc}.

\bigskip

  \hspace*{-30pt}
  \begin{psmatrix}[colsep=2cm,rowsep=2cm] &[name=QQ]{Q,Q} & [name=QF]{Q,F}\\ &[name=FQ]{F,Q} & [name=FF]{F,F} \psset{nodesep=3.5pt}
    \ncline[arrows=->,linewidth=.03,linecolor=blue]{QQ}{QF} %
     \ncline[arrows=->,linewidth=.03,linecolor=blue]{FQ}{FF} %
     \ncline[arrows=->,linewidth=.08,linecolor=red]{QQ}{FQ} %
     \ncline[arrows=->,linewidth=.08,linecolor=red]{QF}{FF} %
     \ncarc[arrows=->,linewidth=.08,linecolor=red, arcangle=-25]{FF}{QQ} %
    \ncarc[arrows=->,linewidth=.03,linecolor=blue, arcangle=25]{FF}{QQ} %
  \end{psmatrix}
  \begin{psmatrix}[colsep=2cm,rowsep=2cm] &[name=QQ]{Q,Q} & [name=QF]{Q,F}\\ &[name=FQ]{F,Q} & [name=FF]{F,F} \psset{nodesep=3.5pt}
    \ncline[arrows=->,linewidth=.03]{QQ}{QF} %
     \ncline[arrows=->,linewidth=.03]{FQ}{FF} %
     \ncline[arrows=->,linewidth=.03]{QQ}{FQ} %
     \ncline[arrows=->,linewidth=.03]{QF}{FF} %
     \ncline[arrows=->,linewidth=.03]{FF}{QQ} %
  \end{psmatrix}

\bigskip

\noindent In the last graph there is one SCC, namely $\{(Q,Q), (Q,F), (F,Q), (F,F)\}$.  This means that the prisoners will never be able to make their mind.
\end{example}

\begin{example}[A quest for the wonderland, the FD equilibria] In Example~\ref{ex:quest}, there are three equilibria (see Fig.~\ref{fig:wonderlang}).  In some cases the wonderland is a group
  of islands. Only island $D$ is a wonderland by itself, therefore the crews can decide to burn their ships or to sink them and stay there. Notice that both crews agree on
  what a wonderland is and that the decision is made collectively as one decision maker.
  \begin{figure}[ht]
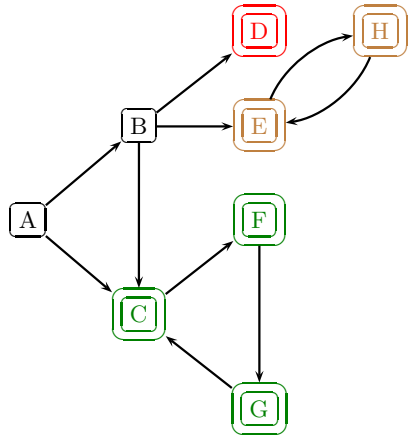

     \begin{psmatrix}[colsep=25pt,rowsep=15pt]
      &&&[name=D]{\color{red}\ovalbox{\ovalbox{D}}} & [name=H]{\color{brown}\ovalbox{\ovalbox{H}}}\\
      &&[name=B]{\ovalbox{B}} & [name=E]{\color{brown}\ovalbox{\ovalbox{E}}}\\
      &[name=A]{\ovalbox{A}} && [name=F]{\color{vertfonce}\ovalbox{\ovalbox{F}}} \\
      &&[name=C]{\color{vertfonce}\ovalbox{\ovalbox{C}}} \\
      &&&[name=G]{\color{vertfonce}\ovalbox{\ovalbox{G}}} \ncline[arrows=->,linewidth=.03]{A}{B} %
      \ncline[arrows=->,linewidth=.03]{A}{C} %
      \ncline[arrows=->,linewidth=.03]{B}{C} %
      \ncline[arrows=->,linewidth=.03]{C}{F} %
      \ncline[arrows=->,linewidth=.03]{F}{G} %
      \ncline[arrows=->,linewidth=.03]{G}{C} %
      \ncline[arrows=->,linewidth=.03]{B}{D} %
      \ncline[arrows=->,linewidth=.03]{B}{E} %
      \ncarc[arrows=->,linewidth=.03,arcangle=30]{E}{H} %
      \ncarc[arrows=->,linewidth=.03,arcangle=30]{H}{E} %
    \end{psmatrix}
     \centering
     \caption{The three equilibria for the quest for the wonderland}
    \label{fig:wonderlang}
  \end{figure}
\end{example}

\section{Choice}
\label{sec:choice}

Let us now look at choices.  Let us write \PX{} the set of non empty subsets of a grand set $X$.  A \emph{choice} is a
partial function $C: \PX "->" X$ with the following constraint\footnote{This means that the choice is consistent because
  it is an element of the set ; $`k$ stands for \emph{consistency}.}:

\medskip

\noindent \textbf{Condition $`k^+$:} \quad $C(A)\in A$.

\medskip

\noindent Write $Dom(C)$, the domain of $C$, i.e. the set of subsets on which $C$ is defined.  When we write $C(A)$, we
assume $A\in \Dom(C)$.  There is another constraint on $C$ called \emph{condition $`a^+$} which says that $C$ is stable
on subsets.

\medskip

\noindent \textbf{Condition $`a^+$:}\quad If $A\subseteq B$ and $C(B)\in A$ then $C(A) = C(B)$.

\medskip

\noindent A \emph{choice correspondence}, written $\Cc$, is a partial function, from \PX{} to \PX{} i.e. such that $\Cc:\PX "->" \PX$, with the constraint:

\medskip \noindent \textbf{Condition $`k$:} \quad $\Cc(A)\subseteq A$.

\medskip

\noindent This means that all the choices $\Cc(A)$ made by $\Cc$ are in $A$.  Since $\Cc$ can be partial, $Dom(\Cc)$ can be a strict subset of $X$. A set $Y$ of subsets of $X$ is a
\emph{$\cap$-semi-lattice} if the intersection of two subsets in $Y$ is in $Y$.
\begin{definition}[$\cap$-semi-lattice] A subset Y of \PX{} is a $\cap$-semi-lattice iff $A\in Y$ and $B\in Y$ implies $(A\cap B) \in Y$.
\end{definition}
Assume $Dom(\Cc)$ is a $\cap$-semi-lattice, we can state a new condition on $\Cc$.

\medskip

\noindent \textbf{Condition $`i$:}\quad If $x\in A$ and $x\in \Cc(B)$ then $x\in \Cc(A\cap B)$.  

In other words $A\cap \Cc(B) \subseteq \Cc(A\cap B)$.

\begin{proposition}
  If conditions $`k$ and $`i$ hold then \(\Cc(A) \cap \Cc(B) \subseteq \Cc(A\cap B).\)
\end{proposition}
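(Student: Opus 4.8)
The plan is to establish the inclusion elementwise, chaining the two hypotheses in the one sensible order. Before that, I would note a matter of well-definedness: since $\Dom(\Cc)$ is assumed to be a $\cap$-semi-lattice and both $A$ and $B$ lie in it, the set $A\cap B$ also belongs to $\Dom(\Cc)$, so that $\Cc(A\cap B)$ is defined and the asserted inclusion is meaningful. This is the only place where the semi-lattice hypothesis is used, and it is what makes condition $`i$ applicable at all.

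Next I would fix an arbitrary $x\in \Cc(A)\cap\Cc(B)$, so that $x\in\Cc(A)$ and $x\in\Cc(B)$ hold simultaneously. Applying condition $`k$ to the set $A$ gives $\Cc(A)\subseteq A$, whence $x\in A$. Thus I now have exactly the two premises required by condition $`i$, namely $x\in A$ together with $x\in\Cc(B)$.

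Invoking condition $`i$ then yields $x\in\Cc(A\cap B)$. Since $x$ was arbitrary, this establishes $\Cc(A)\cap\Cc(B)\subseteq\Cc(A\cap B)$, as claimed.

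I do not expect any genuine obstacle: the statement is essentially a one-line consequence of the two conditions. The only subtlety worth flagging is that the argument is deliberately asymmetric despite the symmetric-looking conclusion. Condition $`k$ is used solely on $A$, to extract the membership $x\in A$; the role of $B$ is carried entirely by $x\in\Cc(B)$, which feeds directly into condition $`i$. One could equally well have used $`k$ on $B$ and $`i$ with the roles of $A$ and $B$ swapped, and either route is fine; what matters is that $`i$ consumes one set as a plain membership constraint ($x\in A$) and the other through the choice ($x\in\Cc(B)$).
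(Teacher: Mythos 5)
Your proof is correct and is precisely the intended argument: the paper states this proposition without any proof, treating it as immediate, and the one-line chain you give --- condition $`k$ applied to $A$ yields $x\in A$, which together with $x\in\Cc(B)$ is exactly the premise of condition $`i$ --- is the evident argument being left to the reader. Your observation that the $\cap$-semi-lattice assumption on $\Dom(\Cc)$ is what makes $\Cc(A\cap B)$ well defined is also accurate, matching the standing hypothesis the paper places just before stating condition $`i$.
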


For choice correspondence, condition $`a^+$ must be changed to take into account that the function does not produce an element but produces a subset.







\bigskip

\noindent \textbf{Condition $`a$:}\quad If $A\subseteq B$, if ${x\in A}$ and if ${x\in \Cc(B)}$, then ${x\in \Cc(A)}$.  

In other words, $A\subseteq B$ implies $A\cap \Cc(B) \subseteq \Cc(A)$.

\medskip

This can be rephrased as ``if $A\subseteq B$ then $A \cap \Cc(B) \subseteq \Cc(A)$''.  In particular, if $\Cc$ produces always a singleton i.e. $\Cc(A) = \{c_A\}$, which means that $\Cc$ is a
choice function, then \emph{condition $`a$} says that if $x\in A$ and if $x\in \{c_B\}$ then $x\in\{c_A\}$, in other words if $x\in A$ and if $x=c_B$ then $x=c_A$ which means
exactly if $c_B\in A$ then $c_A=c_B$, which is condition $`a^+$.   The following proposition can be stated.
\begin{proposition}
  Assume $\Dom(\Cc)$ is a $\cap$-semi-lattice, then
  \begin{enumerate}
  \item $\Cc$ satisfies \emph{condition~$`i$} implies $\Cc$ satisfies \emph{condition $`a$}.
  \item $\Cc$ satisfies \emph{condition~$`k$} and \emph{condition~$`a$}  implies $\Cc$ satisfies  \emph{condition $`i$}.
  \end{enumerate}
\end{proposition}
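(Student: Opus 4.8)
The plan is to reduce both implications to the pointwise reformulations already recorded in the text, namely that condition~$`i$ reads $A\cap\Cc(B)\subseteq\Cc(A\cap B)$ and that condition~$`a$ reads ``$A\subseteq B$ implies $A\cap\Cc(B)\subseteq\Cc(A)$''. The whole argument rests on two elementary facts. First, $A\subseteq B$ is equivalent to $A\cap B=A$. Second, the $\cap$-semi-lattice hypothesis on $\Dom(\Cc)$ guarantees that whenever $\Cc(A)$ and $\Cc(B)$ are meaningful (so that $A,B\in\Dom(\Cc)$), the term $\Cc(A\cap B)$ is meaningful as well, since $A\cap B\in\Dom(\Cc)$. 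These two facts let me pass freely between $A$ and $A\cap B$.

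For the first implication I assume condition~$`i$ and verify condition~$`a$. Take $A\subseteq B$ with $x\in A$ and $x\in\Cc(B)$. Because $A\subseteq B$, we have $A\cap B=A$, so condition~$`i$ applied to the pair $A,B$ gives $x\in\Cc(A\cap B)=\Cc(A)$, which is exactly the conclusion of condition~$`a$. Note that $\Cc(A)$ is defined here because $A=A\cap B\in\Dom(\Cc)$. This direction uses nothing beyond the identity $A\cap B=A$.

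For the second implication I assume conditions~$`k$ and $`a$ and verify condition~$`i$. Suppose $x\in A$ and $x\in\Cc(B)$; I must produce $x\in\Cc(A\cap B)$. I apply condition~$`a$ with $A\cap B$ in the role of the smaller set and $B$ in the role of the larger one: this is legitimate since $A\cap B\subseteq B$ and, by the semi-lattice hypothesis, $A\cap B\in\Dom(\Cc)$. Condition~$`a$ then requires $x\in A\cap B$ together with $x\in\Cc(B)$. The latter is given; for the former, $x\in A$ is assumed, while $x\in B$ follows from condition~$`k$, which yields $\Cc(B)\subseteq B$ and hence $x\in\Cc(B)\subseteq B$. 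Thus $x\in A\cap B$, and condition~$`a$ delivers $x\in\Cc(A\cap B)$.

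Both directions are short; the only points demanding care, and the nearest thing to an obstacle, are bookkeeping ones: checking that every choice term written is actually defined (which is precisely where the $\cap$-semi-lattice assumption is spent) and noticing that condition~$`k$ is used in the second part solely to upgrade $x\in\Cc(B)$ to $x\in B$, so that $x$ lands in $A\cap B$. Without $`k$ the membership $x\in A\cap B$ could fail and condition~$`a$ would not apply.
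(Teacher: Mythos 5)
Your proof is correct and follows essentially the same route as the paper's: part~1 via the identity $A\cap B=A$ when $A\subseteq B$, and part~2 by applying condition~$`a$ to the pair $(A\cap B,\,B)$ while using condition~$`k$ to identify $A\cap B\cap\Cc(B)$ with $A\cap\Cc(B)$. Your version merely writes the same argument pointwise and adds explicit bookkeeping that $A\cap B\in\Dom(\Cc)$ (which the paper leaves implicit), a harmless refinement rather than a different method.
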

\begin{proof}
  For $1.$, if $A \subseteq B$ then $A\cap B=A$, then condition $`i$ implies condition~$`a$.

  For $2.$, one has $A\cap B\subseteq B$, condition $`k$ can be rephrased as $B\cap \Cc(B)=\Cc(B)$ and condition $`a$ is $A\cap B\cap \Cc(B) \subseteq \Cc(A\cap B)$ which is
  condition $`i$.
\end{proof}

\section{Acyclic Relations as Choice Correspondences}
\label{sec:EqAN_as_ch}

In this section we are going to show that any relation (not necessary an order) can be used as a choice correspondence. The only requirement is the existence of a sink.
Such a relation is naturally given by a \emph{\fmdc}.

Now we abstract the presentation of the previous sections, by considering only the pair $\langle \S, "->"\rangle$. One may consider $\S$ as a set of situations and $"->"$ as a
\emph{\fmdc}.  In other words, we forget about feasibility and desirability for a while and retain only~$"->"$.  From a choice point of view, $\S$ is the set of \emph{alternatives}.
A~\emph{choice maker} is then a combination of agents participating in an FD game.  $\Cc_{"->"}$ is the choice correspondence for the relation~$"->"$.  It associates with~$A$ the set
of sinks in $A$ for the relation $"->"$, more precisely:
\[\Cc_{"->"}(A) = \{ s\in A\mid `A s'\in A,\ s "->" s' \ "=>"\ s=s'\}.\] 
In Fig.~\ref{fig:corresp}, $\Cc_{"->"}(A)$ is made of the red triangles and the blue square and $\Cc_{"->"}(B)$ is made of the green circles and the blue square.

\begin{figure}[ht]
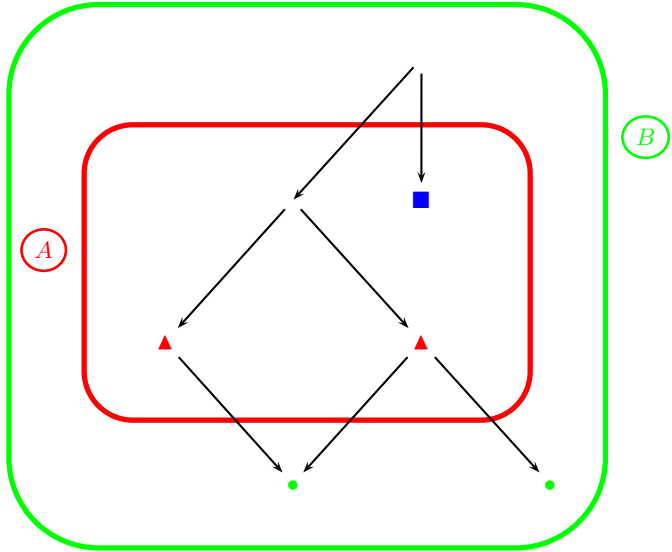

 \vspace*{20pt}
  \(
  \psframe[linewidth=2pt,framearc=.33,linecolor=red](8,2.7)(2,6.7) \psframe[linewidth=2pt,framearc=.33,linecolor=green](9,8.3)(1,1)
  \rput(9.5,6.5){\ovalnode[linewidth=1pt,linecolor=green]{B}{\color{green} B}} \rput(1.5,5){\ovalnode[linewidth=1pt,linecolor=red]{A}{\color{red} A}}
  \begin{psmatrix}[nodesep=3pt]
    &&&& [name=a] \\
    &&& [name=c] & [name=d]{\color{blue} \blacksquare}\\
    &&[name=e] {\color{red} \blacktriangle} && [name=f] {\color{red} \blacktriangle} \\
    &&&[name=g]{\color{green} \bullet} && [name=h]{\color{green} \bullet}
    \\
    \ncline[arrows=->]{a}{c} \ncline[arrows=->]{a}{d} \ncline[arrows=->]{c}{e} \ncline[arrows=->]{c}{f} \ncline[arrows=->]{e}{g} \ncline[arrows=->]{f}{g} \ncline[arrows=->]{f}{h}
  \end{psmatrix}
  \)\centering
  \caption{A relation and two choice correspondences}
  \label{fig:corresp}
\end{figure}
\begin{claim}
  $\Cc_{"->"}$ is a choice correspondence. Its domain is \[\Dom(\Cc_{"->"}) =\{A`: \PX \mid `E s`:A, `A s'`:A, s "->" s' \ "=>" s =s'\}.\]
\end{claim}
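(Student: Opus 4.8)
The plan is to verify directly the two defining requirements of a choice correspondence — that $\Cc_{"->"}$ is a partial function $\PX "->" \PX$ and that it satisfies \emph{Condition $`k$} — and to read the asserted domain off the non-emptiness of the output. \emph{Condition $`k$} is immediate: since $\Cc_{"->"}(A)$ is defined by the set-builder $\{s\in A \mid `A s'\in A,\ s "->" s' \ "=>"\ s=s'\}$, every element it contains already lies in $A$, so $\Cc_{"->"}(A) \subseteq A \subseteq X$. The only way $\Cc_{"->"}(A)$ could fail to be a legitimate value in $\PX$ is thus by being empty, since $\PX$ collects the \emph{non-empty} subsets of $X$.

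Next I would pin the domain down as exactly the set of arguments on which this output is non-empty. Unfolding the definition, $s\in\Cc_{"->"}(A)$ says precisely that $s$ is a sink of $A$, i.e. $s\in A$ and $`A s'\in A,\ s "->" s' \ "=>"\ s=s'$. Hence $\Cc_{"->"}(A)\neq\emptyset$ holds iff $`E s\in A,\ `A s'\in A,\ s "->" s' \ "=>"\ s=s'$, which is verbatim the membership condition defining the set claimed to equal $\Dom(\Cc_{"->"})$. Declaring $\Cc_{"->"}$ undefined on every remaining argument then makes it a well-defined partial function from $\PX$ to $\PX$; together with \emph{Condition $`k$} this establishes that it is a choice correspondence.

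The argument is essentially definitional, so I expect no genuine obstacle; the one point deserving care is that the correspondence must be treated as genuinely \emph{partial}. Were one to make it total, \emph{Condition $`k$} would still hold, but on a sink-free set its value would be $\emptyset\notin\PX$, violating the codomain constraint — and such sets really do occur, as the cyclic examples of Section~\ref{sec:FDeq} (whose full situation sets form a single SCC) show. Restricting to the stated domain is exactly what repairs this, and nothing beyond \emph{Condition $`k$} — in particular neither \emph{Condition $`a$} nor \emph{Condition $`i$} — needs to be verified for the bare claim.
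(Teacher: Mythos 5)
Your proof is correct and matches the paper's treatment: the paper states this claim without proof, regarding it as definitional, and your verification of Condition $`k$ plus the identification of $\Dom(\Cc_{"->"})$ with the sets having at least one sink is exactly the implicit argument, with the partiality point (that $\emptyset \notin \PX$ forces $\Cc_{"->"}$ to be undefined on sink-free sets) correctly made explicit.
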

In other words, the domain of $\Cc_{"->"}$ is the set of sets which have at least a sink situation. 
\begin{claim}
  If $"->"$ is acyclic and $B$ is finite and not empty then $\Cc_{"->"}(B)$ is finite and non empty, in other words, $\Dom(\Cc_{"->"}) = \PX$.
\end{claim}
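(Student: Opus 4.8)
The plan is to dispatch the two assertions separately. Finiteness needs no work: by definition $\Cc_{"->"}(B) \subseteq B$, and a subset of a finite set is finite. All the content lies in showing that $\Cc_{"->"}(B)$ is nonempty, i.e.\ that $B$ contains at least one sink.

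First I would record how acyclicity simplifies the sink condition. A self-loop $s "->" s$ is a path from $s$ to itself and is therefore excluded once $"->"$ is acyclic; hence $s "->" s'$ already forces $s \neq s'$. Consequently the defining implication $\forall s' \in B,\ s "->" s' \Rightarrow s = s'$ collapses to the statement that $s$ has no outgoing arc inside $B$, so $\Cc_{"->"}(B) = \emptyset$ means precisely that every $s \in B$ has some successor $s' \in B$ with $s "->" s'$.

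The heart of the proof is then a finiteness-plus-cycle contradiction. Assume $\Cc_{"->"}(B) = \emptyset$ and set $n = |B|$. Since $B \neq \emptyset$ I pick $s_0 \in B$; by the observation above it has a successor $s_1 \in B$, which has a successor $s_2 \in B$, and iterating this choice $n$ times produces $s_0, s_1, \dots, s_n \in B$ with $s_k "->" s_{k+1}$ for each $k$. These are $n+1$ elements of an $n$-element set, so by pigeonhole $s_i = s_j$ for some $i < j$; then $s_i "->" s_{i+1} "->" \cdots "->" s_j = s_i$ is a path from $s_i$ to itself, contradicting acyclicity. Hence $\Cc_{"->"}(B) \neq \emptyset$. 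Equivalently, I could note that acyclicity makes the transitive closure of $"->"$ a strict partial order and invoke the existence of a maximal element of a finite nonempty poset, such a maximal element being exactly a sink; the pigeonhole argument is just a self-contained version of this.

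Finally, for the reformulation $\Dom(\Cc_{"->"}) = \PX$, I would combine the above with the preceding claim, which identifies $\Dom(\Cc_{"->"})$ as the family of sets containing at least one sink. Because the ground set of situations is finite, every member of $\PX$ is a finite nonempty set, so what has just been proved shows each such set lies in $\Dom(\Cc_{"->"})$; together with the trivial inclusion $\Dom(\Cc_{"->"}) \subseteq \PX$ this gives equality. The only genuinely delicate point in the whole argument is the appeal to finiteness to force the repetition $s_i = s_j$; everything else is bookkeeping.
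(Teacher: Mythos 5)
Your proof is correct and takes essentially the same route as the paper: the paper's entire proof is the one-line remark that ``a finite acyclic relation has always a sink,'' which is precisely the lemma your pigeonhole argument (or, equivalently, your maximal-element-of-a-finite-poset remark) establishes in full. Your closing observation that the identity $\Dom(\Cc_{"->"}) = \PX$ additionally requires the ground set of situations to be finite is a point the paper leaves implicit, so if anything you are slightly more careful than the original.
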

\begin{proof}
  A finite acyclic relation has always a sink.
\end{proof}
\begin{claim}
  $\Cc_{"->"}$ fulfills condition $`k$, i.e. $\Cc_{"->"}(A)\subseteq A$.
\end{claim}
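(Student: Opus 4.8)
The plan is to unfold the definition of $\Cc_{"->"}$ and observe that the desired containment is built directly into the set-builder notation. Concretely, I would fix an arbitrary $A \in \Dom(\Cc_{"->"})$ and an arbitrary element $s \in \Cc_{"->"}(A)$, and show $s \in A$; since $s$ is arbitrary this yields $\Cc_{"->"}(A) \subseteq A$, which is exactly condition $`k$.

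Recall that $\Cc_{"->"}(A) = \{ s\in A\mid `A s'\in A,\ s "->" s' \ "=>"\ s=s'\}$. The comprehension defining $\Cc_{"->"}(A)$ ranges over $s \in A$, so membership in $\Cc_{"->"}(A)$ carries the conjunct $s \in A$ with it. Thus from $s \in \Cc_{"->"}(A)$ I immediately extract $s \in A$, and the inclusion follows.

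There is no genuine obstacle here: the statement is immediate from the way $\Cc_{"->"}(A)$ is presented as a subset of $A$ cut out by the sink predicate. The sink condition $`A s'\in A,\ s "->" s' \ "=>"\ s=s'$ plays no role in establishing $`k$; it only further restricts which elements of $A$ survive, so one needs neither acyclicity, nor finiteness, nor any special property of $"->"$. The single point worth noting is that $\Cc_{"->"}$ is a partial function, so the argument is understood to be run whenever $A \in \Dom(\Cc_{"->"})$; on that domain the reasoning above is entirely unconditional.
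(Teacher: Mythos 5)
Your proof is correct and matches the paper's (implicit) reasoning: the paper states this claim without any proof precisely because, as you observe, the inclusion $\Cc_{"->"}(A)\subseteq A$ is built into the set-builder definition $\Cc_{"->"}(A) = \{ s\in A \mid \forall s'\in A,\ s "->" s' \Rightarrow s=s'\}$. Your remarks that the sink predicate, acyclicity, and finiteness play no role, and that the argument applies on $\Dom(\Cc_{"->"})$, are all accurate.
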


\begin{claim}
  If $"->"$ is acyclic, then $\Cc_{"->"}$ fulfills condition $`a$.
\end{claim}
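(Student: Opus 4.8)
The plan is to read off condition $`a$ directly from the description of $\Cc_{"->"}$ as the set-of-sinks operator, invoking acyclicity only to guarantee that the choices involved are defined. Recall that condition $`a$ requires: whenever $A\subseteq B$, $x\in A$ and $x\in\Cc_{"->"}(B)$, then $x\in\Cc_{"->"}(A)$; equivalently, $A\subseteq B$ implies $A\cap\Cc_{"->"}(B)\subseteq\Cc_{"->"}(A)$.

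First I would dispose of the domain question. Because $"->"$ is acyclic (and we work over a finite grand set $X$), the preceding claim gives $\Dom(\Cc_{"->"})=\PX$, so that $\Cc_{"->"}(A)$ and $\Cc_{"->"}(B)$ are genuine nonempty sets for every nonempty $A\subseteq B$. This is the sole place where acyclicity enters: it is what makes the inclusion $A\cap\Cc_{"->"}(B)\subseteq\Cc_{"->"}(A)$ a well-formed statement rather than one resting on an undefined value $\Cc_{"->"}(A)$.

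The substance of the argument is that the sink property is inherited downward along $\subseteq$. Take $x\in A\cap\Cc_{"->"}(B)$. By definition of $\Cc_{"->"}(B)$ we have $x\in B$ and, for every $s'\in B$, $x"->"s'$ implies $x=s'$. To conclude $x\in\Cc_{"->"}(A)$ it suffices to check that $x\in A$ (given) and that $x$ is a sink in $A$. So let $s'\in A$ satisfy $x"->"s'$; since $A\subseteq B$ we have $s'\in B$, and the sink property of $x$ in $B$ forces $x=s'$. Hence $x$ is a sink in $A$, i.e. $x\in\Cc_{"->"}(A)$, as required.

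I do not anticipate a genuine obstacle: the one conceptual point worth stating clearly is the direction of monotonicity. Restricting from $B$ to the subset $A$ only removes candidate targets $s'$ against which $x$ is tested, so it can never destroy sinkhood; a sink in the larger set $B$ is automatically a sink in any subset $A$ that still contains it, which is precisely the direction condition $`a$ demands. Acyclicity is immaterial to this inclusion as such; its role is confined to the previous step, ensuring via the earlier claim that $\Cc_{"->"}(A)$ exists so that the condition is not satisfied merely on account of an undefined conclusion.
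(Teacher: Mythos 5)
Your proof is correct and takes essentially the same approach as the paper's: its one-line argument --- that for $A\subseteq B$ the sink condition quantified over $B$ is stronger than the one quantified over $A$, so a sink of $B$ lying in $A$ is a sink of $A$ --- is exactly your central step. Your added observation that acyclicity matters only for definedness of $\Cc_{\rightarrow}$ (via the preceding claim), not for the inclusion itself, is a sound gloss that the paper leaves implicit.
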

\begin{proof}
  If $A\subseteq B$, condition $`A s`:B, s"->"s' \ "=>"\ s=s'$ is stronger than ${`A s`:A, s"->"s' \ "=>" s=s'}$ hence $s`:A$ and $s`:\Cc_{"->"}(B)$ imply $s`:\Cc_{"->"}(A)$.
\end{proof}
This is illustrated by the diagram of Fig.~\ref{fig:corresp}.  The blue square belongs to $A$ and is a sink in $B$ then it is also a sink in $A$, in other words, ${\color{blue}
  \blacksquare}`:A$ and ${\color{blue} \blacksquare}`:\Cc_{"->"}(B)$ implies ${\color{blue} \blacksquare}`:\Cc_{"->"}(A)$.

\section{Evolutionary Dynamic Games}
\label{sec:evo}

In \cite{roux08:_conver_prefer_games}, we presented a game called \emph{Blink and you lose} due to Ren\'e Vestergaard.
\subsection{Blink and you lose}
\label{sec:blink} \emph{Blink and you lose} is a game played on a
simple graph with two undifferentiated tokens.  There are three
positions:
\[ \xymatrix { *++[o][F-]{\verte{`(!)}\verte{`(!)}}\ar@{-}[r] &
*++[o][F-]{\phantom{`(!)}\phantom{`(!)}}  } \qquad\qquad \xymatrix {
*++[o][F-]{\verte{`(!)}\phantom{`(!)}}\ar@{-}[r] &
*++[o][F-]{\verte{`(!)}\phantom{`(!)}}  } \qquad\qquad \xymatrix {
*++[o][F-]{\phantom{`(!)}\phantom{`(!)}}\ar@{-}[r] &
*++[o][F-]{\verte{`(!)}\verte{`(!)}}  }
\] There are two players, \emph{Left} and \emph{Right}.  The leftmost
position above is the winning position for \emph{Left} and the
rightmost position is the winning position for \emph{Right}.  In other
words, the one who owns both token is the winner.  Let us call the
positions $L$, $C,$ and $R$ respectively.  One plays by taking a token
on the opposite node.  The game has four tactics.

\subsubsection{A first tactic: Foresight}
\label{sec:ByL1}

A player realizes that she can win by taking the opponent's token
faster than the opponent can react, i.e., player \emph{Left} can
convert $C$ into $L$ by outpacing player \emph{Right}. Player
\emph{Right}, in turn, can convert $C$ into $R$. This version of the
game has two singleton equilibria: $L$ and $R$.  This is described by
the following feasibility.
\[
\begin{psmatrix}[colsep=1.2cm,rowsep=1.2cm,nodesep=5pt]
  & [name=L]{L} & [name=C]{C}& [name=R]{R}
  \ncline[arrows=->,linewidth=.03,linecolor=blue,linestyle=dashed]{C}{R}
  \psset{arrowscale=.5}\ncline[arrows=->,linewidth=.08,arrowscale=.6,linecolor=red,linestyle=dashed]{C}{L}
\end{psmatrix}
\]
desirability is
\[
\begin{psmatrix}[colsep=1.2cm,rowsep=1.2cm,nodesep=5pt]
  & [name=L]{L} & [name=C]{C}& [name=R]{R}
\ncarc[arrows=->,linewidth=.03,linecolor=blue,linestyle=dotted,arcangle=30]{C}{R}
\ncarc[arrows=->,linewidth=.03,linecolor=blue,linestyle=dotted,arcangle=30]{L}{C}
\psset{arrowscale=.5}
\ncarc[arrows=->,linewidth=.08,arrowscale=.6,linecolor=red,linestyle=dotted,arcangle=30]{C}{L}
\ncarc[arrows=->,linewidth=.08,arrowscale=.6,linecolor=red,linestyle=dotted,arcangle=30]{R}{C}
\end{psmatrix}
\]

\medskip

where $\begin{psmatrix}[colsep=1cm,arrowscale=.5]
  [name=C]& [name=R]
\ncline[arrows=->,linewidth=.08,arrowscale=.6,linecolor=red,linestyle=dotted,arcangle=30]{C}{R}
\end{psmatrix}$ is the desirability
for \emph{Left} and $\begin{psmatrix}[colsep=1cm]
  [name=C]& [name=R]
\ncline[arrows=->,linewidth=.03,linecolor=blue,linestyle=dotted,arcangle=30]{C}{R}
\end{psmatrix}$ is the desirability for
\emph{Right}.  The general \fmdc{} is then:
\[\begin{psmatrix}[colsep=1.2cm,rowsep=1.2cm,nodesep=5pt]
  & [name=L]{L} & [name=C]{C}& [name=R]{R}
 \ncline[arrows=->]{C}{L} 
 \ncline[arrows=->]{C}{R}
\end{psmatrix}
\] and one sees that
there are two equilibria: namely $L$ and $R$, which means that players
have taken both token and keep them.

\subsubsection{A second tactic: Hindsight}
\label{sec:ByL2}

A player, say \emph{Left}, analyzes what would happen if she does not
act. In case \emph{Right} acts, the game would end up in $R$ and
\emph{Left} loses.  As we all know, people hate to lose so they have
an aversion for a losing position. Actually \emph{Left} concludes
that she could have prevented the $R$ outcome by acting. In other
words, it is within \emph{Left}'s power to convert $R$
into~$C$. Similarly for player \emph{Right} from $L$ to~$C$.
\[
\begin{psmatrix}[colsep=1.2cm,rowsep=1.2cm,nodesep=5pt]
  & [name=L]{L} & [name=C]{C}& [name=R]{R}
  \ncline[arrows=->,linewidth=.03,linecolor=blue,linestyle=dashed]{R}{C}
  \psset{arrowscale=.5}\ncline[arrows=->,linewidth=.08,arrowscale=.6,linecolor=red,linestyle=dashed]{L}{C}
\end{psmatrix}
\]

We get the following general \fmdc{}:
\[\begin{psmatrix}[colsep=1.2cm,rowsep=1.2cm,nodesep=5pt]
  & [name=L]{L} & [name=C]{C}& [name=R]{R}
 \ncline[arrows=->]{L}{C} 
 \ncline[arrows=->]{R}{C}
\end{psmatrix}
\] 
where $C$ is a singleton equilibrium or an Abstract Nash Equilibrium.

\subsubsection{A third tactic: Omnisight}
\label{sec:ByL3}

The players have both hindsight and foresight, resulting in an FD game
\[
\begin{psmatrix}[colsep=1.2cm,rowsep=1.2cm,nodesep=5pt]
  & [name=L]{L} & [name=C]{C}& [name=R]{R}
\ncarc[arrows=->,linewidth=.03,linecolor=blue,linestyle=dashed,arcangle=30]{C}{R}
\ncarc[arrows=->,linewidth=.03,linecolor=blue,linestyle=dashed,arcangle=30]{L}{C}
\psset{arrowscale=.5}
\ncarc[arrows=->,linewidth=.08,arrowscale=.6,linecolor=red,linestyle=dashed,arcangle=30]{C}{L}
\ncarc[arrows=->,linewidth=.08,arrowscale=.6,linecolor=red,linestyle=dashed,arcangle=30]{R}{C}
\end{psmatrix}
\]
with one change-of-mind equilibrium covering all
outcomes thus, no singleton equilibrium (or Abstract Nash Equilibrium) exists.
\[
\begin{psmatrix}[colsep=1.2cm,rowsep=1.2cm,nodesep=5pt]
  & [name=L]{L} & [name=C]{C}& [name=R]{R}
\ncarc[arrows=->,arcangle=30]{C}{R}
\ncarc[arrows=->,arcangle=30]{L}{C}
\ncarc[arrows=->,arcangle=30]{C}{L}
\ncarc[arrows=->,arcangle=30]{R}{C}
\end{psmatrix}
\]

\subsubsection{A fourth tactic: Defeatism}
\label{sec:defeat}

One of the player, say \emph{Left}, acknowledges that she will be
outperformed by the other (\emph{Right} in this case).  She is so
terrified by her opponent that she returns the token when she has it.\footnote{This strategy was not presented in \cite{roux08:_conver_prefer_games}.}
This yields the following feasibility:
\[
\begin{psmatrix}[colsep=1.2cm,rowsep=1.2cm,nodesep=5pt]
  & [name=L]{L} & [name=C]{C}& [name=R]{R}
  \ncline[arrows=<-,linewidth=.03,linecolor=blue,linestyle=dashed]{R}{C}
  \psset{arrowscale=.5}\ncline[arrows=->,linewidth=.08,arrowscale=.6,linecolor=red,linestyle=dashed]{L}{C}
\end{psmatrix}
\]
We get the following general \fmdc{}
\[\begin{psmatrix}[colsep=1.2cm,rowsep=1.2cm,nodesep=5pt]
  & [name=L]{L} & [name=C]{C}& [name=R]{R}
 \ncline[arrows=->]{L}{C} 
 \ncline[arrows=->]{C}{R}
\end{psmatrix}
\] 
where $R$ is a singleton equilibrium or an Abstract Nash Equilibrium.

\subsection{Blink and you loose and evolutionary games}
\label{sec:evol}

\cite{nowak04:_evolut_dynam_of_biolog_games} present outcomes of evolutionary games of two strategies, like in Fig.~\ref{fig:Now-Sig}.  They call the first
outcome: \emph{dominance}, ($A$ vanishes, if $B$ is the best reply to both $A$ and $B$), for us this is the tactic \emph{defeatism} in \emph{Blink and you loose}.  They call the
second outcome \emph{bistability} (either $A$ or $B$ vanishes, depending on the initial mixture, if each strategy is the best response to itself), for us this is the tactic
\emph{foresight}.  They call the third outcome: \emph{coexistence} ($A$~and $B$ coexist in stable proportion, if each strategy is the best response to the other), for us this is
the tactic \emph{hindsight}.  They call the fourth outcome: \emph{neutrality} (the frequencies of $A$ and $B$ are only subject to random drift, if each strategy fares as well as the other
for any composition of the population and exhibit the same pictures), for this corresponds to the tactic \emph{omnisight}.   

\begin{figure}[t]
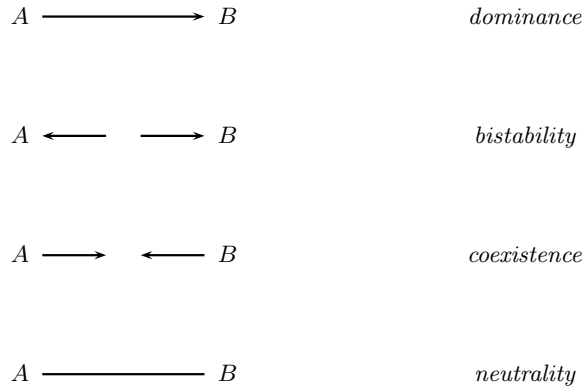

  \centering
  \[
\begin{psmatrix}[colsep=1.2cm,rowsep=1.2cm,nodesep=5pt]
  & [name=A1]{A} && [name=B1]{B}& {\qquad} & \emph{dominance}\\
  & [name=A2]{A} & [name=C2]{~} &  [name=B2]{B}&   {\qquad} & \emph{bistability}\\
  & [name=A3]{A} & [name=C3]{~} &  [name=B3]{B}&   {\qquad} & \emph{coexistence}\\
  & [name=A4]{A} &&  [name=B4]{B}&   {\qquad} & \emph{neutrality}\\
\ncline[arrows=->,linewidth=.03]{A1}{B1}
\ncline[arrows=->,linewidth=.03]{C2}{A2} \ncline[arrows=->,linewidth=.03]{C2}{B2}
\ncline[arrows=->,linewidth=.03]{A3}{C3} \ncline[arrows=->,linewidth=.03]{B3}{C3}
\ncline[linewidth=.03]{A4}{B4}
\end{psmatrix}
\]
  \caption{Evolutionary game dynamics (after Nowak and Sigmund)}
  \label{fig:Now-Sig}
\end{figure}

\subsection{Evolution of FD games}
\label{sec:ev_FDg}

FD games are a natural frame for evolution of games, especially of normal form games.  First recall that they abstract normal form games. Second they describe accurately the
process of evolution.  Consider a sequence of games describing an evolution.  Suppose that at step $n$ player $a$ is at a situation $s_n$.  At step $n+1$, he will move, if this is
possible, to a situation $s_{n+1}$ he can move to (feasible) and he wants to (desirable).  Therefore he will naturally follow an arc of the relation \emph{\fmdc}.  There will be
two kinds of interesting outcomes: either at some step, players reach a situation they cannot proceed further, this is an abstract Nash equilibrium, or players reach a strongly
connected component for the relation \emph{\fmdc}, in which the move forever without being able to leave it, this is an FD equilibrium.

\subsection{Matching pennies with hidden coins}
\label{sec:mphc}

One can consider the evolution of the game of Fig.~\ref{fig:match_pennies_2}.  One sees that it can have two evolutions, either to $N,N$ which is an abstract Nash equilibrium or
to the SCC which is an FD-equilibrium.

\medskip

\begin{center}
  \begin{psmatrix}[nodesep=3pt,colsep=28pt,rowsep=24pt]
    & [name=HH]{\fbox{$H,H$}} & [name=HT]{\fbox{$H,T$}}\\
    & [name=TH]{\fbox{$T,H$}} & [name=TT]{\fbox{$T,T$}} %
    \ncline[arrows=->]{TT}{TH} %
    \ncline[arrows=->]{TH}{HH} %
    \ncline[arrows=->]{HH}{HT} %
    \ncline[arrows=->]{HT}{TT} %
  \end{psmatrix}
\end{center}

\paragraph{Acknowledgment}
I would like to thanks Franck Delaplace, St\'ephane Le~Roux and Ren\'e Vestergaard for discussions and suggestions.

\section{Conclusion}
\label{sec:concl}

This paper has shown the connection between game theory (more specially FD game theory which covers strategic game theory) choice models and evolutionary games, namely we have
shown that if the game has only abstract Nash equilibria, the function that yields the set of all the abstract Nash equilibria is a choice correspondence.


\appendix

\section*{Notions on directed graphs}

In this section we recall notions on relations which are also called directed oriented graphs and operations on them.
\begin{description}
\item[Graphs or relations] A \emph{directed graph} or a \emph{relation} $R$ on $A$ is a subset of $A\times A$. The elements of $A$ are called the \emph{nodes} of the graph and the pairs
  of $R$ are called the arcs of the graph.  One writes $x\,R\,y$ if $(x,y)`:R$.
\item[A relation is transitive] if $x\,R\,y$ and $y\,R\,z$ implies $x\,R\,z$.
\item[A relation is reflexive] if $x\,R\,x$ for all $x$.

\item[The transitive closure] of a graph or of a relation $R$ is the least relation $R^+$ which contains $R$.
\item[A directed acyclic graph] is a graph such that there exists no path from $x$ to $x$, in other words, there is no $x$, such that $x\,R^+\,x$.
\item[The transitive and reflexive closure] of a relation $R$ is the least transitive and reflexive relation $R^*$ which contains $R$.
\end{description}

\section*{Why is this article published only on the arXiv?}

This paper does present deep results from the mathematical point of view, like all the papers quoted in the
bibliography. Like previous ones \citep{Roux2006,roux08:_conver_prefer_games} it has been submitted to several journals and has
been rejected without having being actually read by any referee or editors, with only subjective arguments.  It should
be interesting to know what kinds of sociological arguments lie behind this rejection.  My view is that actually
scientists are reluctant to new ideas and new paradigms.  Thank to the \textsf{arXiv} system, this research can be made
available to a large community and to the future generation.

\end{document}